\documentclass{article}

\usepackage{amsmath, amsthm, amssymb}
\usepackage{enumerate}
\usepackage{tabularx}
\usepackage{array}
\usepackage{fullpage}
\usepackage{graphicx}

\newtheorem{theorem}{Theorem}

\newtheorem{corollary}{Corollary}

\theoremstyle{definition}
\newtheorem{remark}{Remark}


\newcommand{\B}{\mathcal{B}}
\DeclareMathOperator{\dist}{\mathrm{dist}}
\DeclareMathOperator{\E}{\mathrm{E}}

\usepackage[numbers]{natbib}
\bibliographystyle{plain}

\title{Thorup-Zwick Emulators are Universally Optimal Hopsets\thanks{Supported by NSF Grants CCF-1514383 and CCF-1637546.}}
\author{Shang-En Huang\\ University~of Michigan\and Seth Pettie\\ University~of Michigan}

\begin{document}
\maketitle

\begin{abstract}
A $(\beta,\epsilon)$-\emph{hopset} is, informally, a weighted edge set that, when added to a graph, allows one to get from point $a$ to point $b$ using a path with at most $\beta$ edges (``hops'') and length $(1+\epsilon)\dist(a,b)$.  In this paper we observe that Thorup and Zwick's \emph{sublinear additive} emulators are also actually $(O(k/\epsilon)^k,\epsilon)$-hopsets for every $\epsilon>0$, and that with a small change to the Thorup-Zwick construction, the size of the hopset can be made $O(n^{1+\frac{1}{2^{k+1}-1}})$.
As corollaries, we also shave 
``$k$'' factors off the size of
Thorup and Zwick's~\cite{TZ06} sublinear additive emulators 
and the sparsest known
$(1+\epsilon,O(k/\epsilon)^{k-1})$-spanners, 
due to Abboud, Bodwin, and Pettie~\cite{ABP17}.
\end{abstract}

\section{Introduction}

Let $G=(V, E, w)$ be a weighted undirected graph. Define $\dist_G^{(\beta)}(u, v)$ to be the length of the shortest path from $u$ to $v$ in $G$ \emph{that uses at most $\beta$ edges}, or ``hops.'' 
Whereas 
$\dist_G = \dist_G^{(\infty)}$ 
is a metric, 
$\dist_G^{(\beta)}$ is not in general.
A set $H\subset {V\choose 2}$ of weighted edges is called a \emph{$(\beta, \epsilon)$-hopset} if for every $u, v\in V$, $$\dist_G(u, v)\le \dist_{G\cup H}^{(\beta)}(u, v)\le (1+\epsilon)\dist_G(u, v).$$

\paragraph{Background.}
Cohen~\cite{Coh00} formally defined the notion of a hopset, but the idea was latent in earlier work~\cite{UY91,KS97,Cohen97,SS99}.  Cohen's $(\beta,\epsilon)$-hopset had size $O(n^{1+1/\kappa}\log n)$ and 
$\beta = (\epsilon^{-1}\log n)^{O(\log\kappa)}$.  Elkin and Neiman~\cite{ElkinNeiman16} showed that a constant hopbound $\beta$ suffices (when $\kappa,\epsilon$ are constants).
In particular, their hopset has size $O(n^{1+1/\kappa}\log n\log \kappa)$ and
$\beta = O(\epsilon^{-1}\log\kappa)^{\log\kappa}$.  Abboud, Bodwin, and Pettie~\cite{ABP17} recently proved that the tradeoffs of~\cite{ElkinNeiman16} are essentially optimal: for any integer $k$, any hopset of size $n^{1+\frac{1}{2^{k+1}-1}-\delta}$ must have $\beta = \Omega(c_k / \epsilon^{k+1})$, where $c_k$ is a constant depending only on $k$.\footnote{Note that setting $\kappa = 2^{k+1}-1$ in the Elkin-Neiman construction gives $\beta = O(k/\epsilon)^k$, where $\log\kappa = \lfloor\log\kappa\rfloor = k$.  Thus, saving any $\delta$ in the exponent of the hopset increases $\beta$ significantly.  In general, the statement of~\cite{ElkinNeiman16} obscures the nature of the tradeoff: there are \emph{not} distinct tradeoffs for each $\kappa \in \{1,2,3,\ldots\}$,
but only for $\kappa \in \{1,3,7,\ldots,2^{k+1}-1,\ldots\}$.}
There are other constructions of hopsets~\cite{Bernstein09, HKN14, HKN16, MPVX15} that are designed for parallel or dynamic environments; their tradeoffs (between hopset size and hopbound) are worse than~\cite{Coh00,ElkinNeiman16} and the ones presented here.  
See Table~\ref{table:previous-works}.

\renewcommand{\arraystretch}{1.4}
\begin{table}[h]
\centering
\begin{tabular}{|l|l|l|l|}
\hline
{\textbf{Authors}} & {\textbf{Size}} & \textbf{Hopbound} & \textbf{Stretch}\\
\hline
Klein and Subramanian~\cite{KS97} & $O(n)$    & $O(\sqrt{n}\log n)$  & 1\\\hline
Thorup and Zwick~\cite{TZ05} & $O(\kappa n^{1+1/\kappa})$ & 2 & $2\kappa-1$\\\hline
Cohen  \cite{Coh00} & $O(n^{1+\frac{1}{\kappa}}\cdot \log n)$ & $((\log n)/\epsilon)^{O(\log \kappa)}$ & $1+\epsilon$ \\
\hline
Elkin and Neiman \cite{ElkinNeiman16} & $O(n^{1+\frac{1}{\kappa}} \log n\log\kappa)$ & $O((\log \kappa)/\epsilon)^{\log \kappa}$ & $1+\epsilon$\\
\hline
Abboud, Bodwin, and Pettie \cite{ABP17} & 
\multicolumn{1}{l}{$n^{1+\frac{1}{2^{k+1}-1}-\delta}$} $\longrightarrow$
& 
$\Omega(c_k / \epsilon^{k+1})$ & $1+\epsilon$\\
\hline\hline
{\textbf{New}} & $O\left(n^{1+\frac{1}{2^{k+1}-1}}\right)$ & $O(k/\epsilon)^k$ & $1+\epsilon$\\\hline

\end{tabular}
\caption{Tradeoffs between size and hopbound of previous hopsets.  Fix the parameter $\kappa=2^{k+1}-1$ to compare~\cite{Coh00,ElkinNeiman16} against the lower bound~\cite{ABP17} and the new result.}\label{table:previous-works}
\end{table}

\paragraph{Hopsets, Emulators, and Spanners.} Recall that $G$ is an undirected graph, possibly weighted.  A \emph{spanner} is a subgraph of $G$ such that $\dist_H(u,v) \le f(\dist_G(u,v))$ 
for some nondecreasing \emph{stretch function $f$}.
An \emph{emulator} of an \emph{unweighted} graph $G$ is a \emph{weighted} edge set $H$ such that $\dist_H(u,v) \in [\dist_G(u,v), f(\dist_G(u,v))]$.  Syntactically, the definition of hopsets is closely related to emulators.  The difference is that hopsets have a hopbound constraint but are allowed to use original edges in $G$ whereas emulators must use only $H$.  The purpose of emulators is to \emph{compress} the graph metric $\dist_G$: 
ideally $|H| \ll |E(G)|$.  Historically, the literature on hopset constructions~\cite{Coh00,ElkinNeiman16} has been noticeably more complex than those of spanners and emulators, many of which~\cite{Althofer+93,ACIM99,DHZ00,TZ06,BKMP10,Knudsen14,ABP17} are quite elegant.  Our goal in this work is to demonstrate that there is nothing \emph{intrinsically} complex about hopsets, and that a very simple construction improves on all prior constructions and matches the Abboud-Bodwin-Pettie lower bound.

\paragraph{New Results.} Thorup and Zwick~\cite{TZ06} designed their emulator for \emph{unweighted} graphs, and proved that it has size $O(kn^{1+\frac{1}{2^{k+1}-1}})$ and a sublinear additive stretch function $f(d) = d + O(kd^{1-1/k})$.  In this paper we show that the Thorup-Zwick emulator, when applied to a \emph{weighted} graph, produces a $(\beta,\epsilon)$-hopset that achieves every point on the Abboud-Bodwin-Pettie~\cite{ABP17} lower bound tradeoff curve.  Moreover, with two subtle modifications to the construction, we can reduce the size to $O(n^{1+\frac{1}{2^{k+1}-1}})$, shaving off a factor $k$.  Our technique also applies to other constructions, and as corollaries we improve the size of Thorup and Zwick's emulator~\cite{TZ06} and Abboud, Bodwin, and Pettie's $(1+\epsilon,\beta)$-spanners.\footnote{A $(1+\epsilon,\beta)$-spanner of an unweighted graph is one with stretch function $f(d) = (1+\epsilon)d + \beta$.}

\begin{theorem}\label{thm:hopset}
Fix any weighted graph $G$ and integer $k\ge 1$. 
There is a $(\beta,\epsilon)$-hopset for $G$ with size $O(n^{1+\frac{1}{2^{k+1}-1}})$ and $\beta = 2\left(\frac{(4+o(1))k}{\epsilon}\right)^k$. 
\end{theorem}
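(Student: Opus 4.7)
My plan is to analyze (a slight modification of) the Thorup--Zwick emulator $H$ and establish two things: that its expected size is $O(n^{1+1/(2^{k+1}-1)})$, and that for every $u,v \in V$ the edge set $G\cup H$ contains a $u$--$v$ walk of length at most $(1+\epsilon)\dist_G(u,v)$ using at most $\beta = 2((4+o(1))k/\epsilon)^k$ hops.

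I would start by recalling the TZ hierarchy $V = A_0 \supseteq A_1 \supseteq \cdots \supseteq A_k$, where $A_i$ is obtained by sampling each vertex of $A_{i-1}$ independently with a probability tuned so that $|A_i|$ concentrates around $n^{(2^{k-i+1}-1)/(2^{k+1}-1)}$. For $v \in V$ and $i \in \{0,\ldots,k-1\}$, let $p_{i+1}(v)$ be a nearest vertex of $A_{i+1}$ and let the bunch $B_i(v) = \{w \in A_i : \dist_G(v,w) < \dist_G(v,A_{i+1})\}$. The emulator $H$ consists of a weighted edge from $v$ to each $w \in B_i(v)$ with weight $\dist_G(v,w)$, over all $v$ and $i$. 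The two small modifications (precise re-tuning of the sampling rates and suppression of bunch edges already added at a higher level) shave the spurious factor $k$ from the size. By the standard geometric argument, $\E[|B_i(v)|] \le 1/q_{i+1}$, where $q_{i+1}$ is the rate from $A_i$ to $A_{i+1}$; summing over $v$ and $i$ yields the stated $O(n^{1+1/(2^{k+1}-1)})$ expected size.

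The substance of the theorem is the hop/stretch guarantee. I would prove the following lemma by induction on $i$: for every $x,y$ at $G$-distance $d'$ and every $i \in \{0,\ldots,k\}$, there is an $x$--$y$ walk in $G\cup H$ of length at most $(1+i\epsilon/k)d'$ using at most $\beta_i := 2(4k/\epsilon)^i$ hops. Theorem~\ref{thm:hopset} is the case $i=k$. The inductive step uses a shortest $x$--$y$ path $P$ partitioned into $\Theta(k/\epsilon)$ equal-length segments of length $L \approx \epsilon d'/(4k)$. For each segment endpoint $z$, either $p_{i+1}(z)$ lies within distance $L$ and a one-hop emulator edge advances us to it, or $p_{i+1}(z)$ is farther, in which case $B_i(z)$ contains every level-$i$ pivot inside the ball of radius $L$ around $z$, and a one-hop emulator edge jumps to a level-$i$ pivot near the next endpoint. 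Between consecutive jumps I invoke the inductive hypothesis at level $i-1$ on a subpath of length $\le L$, giving the recurrence $\beta_i = O(k/\epsilon)\cdot \beta_{i-1}$ and, by symmetry of the two endpoints $u,v$, the factor $2$ in front.

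I expect the main obstacle to be the bookkeeping of additive error across levels. Each level-$i$ jump may detour by up to $2L$ beyond the true segment length, and there are $\Theta(k/\epsilon)$ segments per level, so the naive per-level excess is $\Theta(\epsilon d')$, which multiplied by $k$ levels would blow the stretch budget by a factor of $k$. I would address this by charging the level-$i$ excess only to the level-$i$ allowance $\epsilon d'/k$, verifying inductively that the cumulative excess telescopes to at most $\epsilon d'$ overall. This careful accounting, together with the recurrence, gives the claimed $\beta = 2((4+o(1))k/\epsilon)^k$.
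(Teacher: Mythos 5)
Your high-level plan (TZ emulator with retuned sampling plus an inductive hop/stretch argument) matches the paper, but the inductive invariant you state is not correct, and this is not a bookkeeping issue — it is the crux of the argument.

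You claim, for every pair $x,y$ at distance $d'$ and every level $i$, a walk of length at most $(1+i\epsilon/k)d'$ using at most $\beta_i = 2(4k/\epsilon)^i$ hops. At $i=0$ this asserts that every pair is joined by a $2$-hop path of length exactly $\dist_G(x,y)$, which is false for this (or any sparse) edge set. More fundamentally, for $i<k$ no unconditional statement of this kind can hold: a pair $u,v$ can simply fail to be ``covered'' at level $i$ because the relevant bunch edges are absent. The paper's invariant (Theorem~\ref{lemma:stretch-analysis}) is therefore a \emph{disjunction}: either an $h_i$-hop $u$--$v$ path with controlled additive surplus exists, \emph{or} an $h_i$-hop path from $u$ to some vertex of $V_{i+1}$ with controlled length exists. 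The second branch is the escape valve that makes the induction go through; it is discharged only at $i=k$, where $V_{k+1}=\emptyset$ forces the first branch. Your sketch does touch the underlying dichotomy inside the inductive step (``$p_{i+1}(z)$ close'' vs.\ ``$B_i(z)$ contains nearby level-$i$ pivots''), but because your invariant does not carry the disjunction, the ``close pivot'' case leaves the walk stranded at a $V_{i+1}$-vertex with no way to account for reaching $y$, and your recursion only goes downward in level ($i\to i-1$), so there is nowhere to hand off the $V_{i+1}$-vertex. Relatedly, your claim that a one-hop bunch edge ``jumps to a level-$i$ pivot near the next endpoint'' needs the two-endpoint argument of the paper: take the first and last failing multi-hop segments, get $u',v'\in V_i$ near them via the (ii)-branch of the hypothesis, and split on whether $(u',v')\in H$; if not, $p_{i+1}(u')$ is at least as close as $v'$ and you escape to $V_{i+1}$. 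Finally, the factor $2$ in $\beta$ comes from the recurrence $h_i=(r+1)h_{i-1}+r$ (so $h_k+1=(r+1)^k(h_0+1)=2(r+1)^k$), not from any symmetry between the endpoints.

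To repair the proof: replace your unconditional lemma with the disjunctive one, parameterize by a unit $\mu=d/r^k$ with $r=\Theta(k/\epsilon)$ so that level-$i$ instances have scale $r^i\mu$, partition the shortest path into at most $r$ multi-hop segments (length $\le r^{i-1}\mu$) interleaved with at most $r-1$ single edges, and carry the additive surplus as $((r+4)^i-r^i)\mu$ so that at $i=k$ the binomial expansion with $r\approx 4k/\ln(1+\epsilon)$ yields the $(1+\epsilon)$ multiplicative stretch. The size analysis you sketch (each vertex contributes bunch edges at only its own level, and $q_i$ tuned so $\E[|E_i|]$ decays geometrically in $i$) is on the right track and matches the paper.
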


\begin{theorem}\label{thm:emulator} (cf.~\cite{TZ06})
Fix any unweighted graph $G$ and integer $k\ge 1$.
There is a sublinear additive emulator $H$ for $G$ with 
size $O(n^{1+\frac{1}{2^{k+1}-1}})$ and stretch function 
$f(d) = d + (4+o(1))kd^{1-1/k}$.
\end{theorem}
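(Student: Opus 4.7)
The plan is to construct $H$ using the same modified Thorup--Zwick hierarchy that underlies the proof of Theorem~\ref{thm:hopset}, now specialized to an unweighted $G$. Sample $V = A_0 \supseteq A_1 \supseteq \cdots \supseteq A_k = \emptyset$ with the probabilities that yield level sizes $|A_i|$ shrinking geometrically and, in aggregate, give an edge count $O(n^{1+\frac{1}{2^{k+1}-1}})$---the same count as in Theorem~\ref{thm:hopset}, and an improvement by a factor of $k$ over the original~\cite{TZ06} construction. For each vertex $v$ and level $i$, let $p_i(v)$ denote its closest vertex in $A_i$ (with some consistent tie-breaking). The emulator $H$ retains, from the full Thorup--Zwick bunch, only those edges $(v,u)$ of weight $\dist_G(v,u)$ that the hopset analysis of Theorem~\ref{thm:hopset} invokes; this pruning is precisely what produces the factor-$k$ savings while preserving reachability through $H$ alone.

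For the stretch analysis, I would follow the standard Thorup--Zwick telescoping. Given $u,v$ with $\dist_G(u,v)=d$, fix a shortest path $P$ and conceptually partition it into $O(k)$ segments, each of length $\Theta(d^{1/k})$. Beginning from $u$, repeatedly hop through $H$: at step $i$ move to the level-$i$ pivot of the current endpoint, paying an additive error $O(d^{1/k})$ per hop. Symmetrizing from the $v$ side and meeting at the top sampled level yields total additive error $(4+o(1))k\cdot d^{1-1/k}$ on top of the baseline distance $d$. The leading constant $4+o(1)$ would follow the same careful per-hop accounting already carried out in the proof of Theorem~\ref{thm:hopset}, now instantiated at the length scale $d^{1/k}$ that arises naturally in an unweighted graph (this is the ``corollary'' spirit mentioned in the introduction: the hopset argument, with $\epsilon$ effectively set to $\Theta(d^{-1/k})$, is what produces the sublinear additive stretch).

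The main obstacle will be verifying that the pruned bunches still support the entire telescoping argument using $H$-edges alone, since unlike a hopset we cannot fall back on original edges of $G$. The delicate case is when the natural level-$i$ pivot of the current endpoint lies past the next $A_{i+1}$ sample encountered along $P$: here one must show that the pruned bunch still contains an $H$-edge either to that $A_{i+1}$ pivot or to an intermediate vertex whose $\dist_G$-weight differs from the path length by at most $O(d^{1/k})$. Once this case is handled---essentially by mirroring the pivot-escalation rule used in the proof of Theorem~\ref{thm:hopset}---the per-level errors add up to give the claimed $f(d) = d + (4+o(1))kd^{1-1/k}$.
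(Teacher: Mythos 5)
Your high-level plan is sound and matches the spirit of the paper: reuse the hopset hierarchy, take $\mu$ to be one unit (one edge) of the unweighted graph, and choose the branching parameter $r=\Theta(d^{1/k})$ so that $\mu = d/r^k = \Theta(1)$ and the additive surplus $((r+4)^k - r^k)\mu \approx 4kr^{k-1}\mu \approx 4kd^{1-1/k}$. You also correctly flag the one genuine difference between the two settings, namely that the emulator must reach every vertex using $H$-edges alone, with no fallback to edges of $G$.

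However, the mechanism you sketch for the stretch bound is not the one that works. You describe partitioning $P$ into ``$O(k)$ segments, each of length $\Theta(d^{1/k})$'' and then ``at step $i$ move to the level-$i$ pivot of the current endpoint, paying $O(d^{1/k})$ per hop, meeting at the top level.'' This is the pivot-escalation argument from Thorup--Zwick distance oracles/spanners, which yields \emph{multiplicative} stretch, not sublinear additive stretch, and there is no a priori bound like $\dist(w,p_{i+1}(w)) = O(d^{1/k})$ for the escalation steps. Note also that $O(k)$ segments of length $\Theta(d^{1/k})$ do not even cover the path when $k\ge 2$. The correct argument is the $k$-level recursive partition of Theorem~\ref{lemma:stretch-analysis}: a level-$i$ segment of length at most $r^{i}\mu$ is split into at most $r$ level-$(i-1)$ segments, giving $r^k$ atomic pieces at the bottom, and either all sub-segments satisfy (i) and the errors sum to at most $((r+4)^i-r^i)\mu$, or two failing sub-segments produce $V_i$-vertices $u',v'$, and one connects through $(u',v')\in H$ or escalates to $u''=p_{i+1}(u')\in V_{i+1}$. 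As for the obstacle you raise, the resolution is simpler than a new ``pivot-escalation rule'': since $G$ is unweighted and $\mu=1$, each edge has length $1\le r^{i-1}\mu$ for all $i\ge 1$, so a level-$i$ segment of (integer) length $\le r^i$ can be cut into exactly $\lceil \text{len}/r^{i-1}\rceil\le r$ consecutive pieces of length $\le r^{i-1}$, eliminating the single-hop ($G$-edge) segments of the weighted proof altogether. With that observation the induction of Theorem~\ref{lemma:stretch-analysis} goes through verbatim using only $H$-edges, and the bound $f(d)=d+(4+o(1))kd^{1-1/k}$ follows.
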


\begin{theorem}\label{thm:spanner} (cf.~\cite{ABP17})
Fix any unweighted graph $G$, integer $k\ge 1$, and real $\epsilon >0$.  There is a $(1+\epsilon, ((4+o(1))k/\epsilon)^{k-1})$-spanner $H$ for $G$ with size $O((k/\epsilon)^h n^{1+\frac{1}{2^{k+1}-1}})$, 
where $h = \frac{3\cdot 2^{k-1} - (k+2)}{2^{k+1}-1} < 3/4$.
\end{theorem}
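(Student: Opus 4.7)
The plan is to adapt the $(1+\epsilon,\beta)$-spanner construction of Abboud, Bodwin, and Pettie~\cite{ABP17} by applying the same two modifications to the Thorup--Zwick hierarchy that yielded the shaved size in Theorems~\ref{thm:hopset} and~\ref{thm:emulator}. The backbone of the construction is a nested sequence of sampled vertex sets $V = A_0 \supseteq A_1 \supseteq \cdots \supseteq A_{k-1}$ together with the redesigned bunches of Theorem~\ref{thm:hopset}, now instantiated on the unweighted graph $G$. Since $G$ is unweighted, each ``bunch edge'' $(v,u)$ of the backbone can be realized by a shortest $u$--$v$ path in $G$, but naively adding these paths is too expensive, so I will instead include them implicitly through truncated BFS trees at cluster centers.

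The first stage is to add the bunch edges themselves, replacing each with a single BFS edge wherever possible; by the same counting as in Theorem~\ref{thm:hopset}, this contributes $O(n^{1+1/(2^{k+1}-1)})$ edges. The second stage is to add, for each level $i$ and each $v \in A_i$, a shortest-path tree rooted at $v$ truncated to a carefully chosen radius $R_i$; these trees are what allow a $(1+\epsilon)$-approximate path to ``enter'' and ``exit'' a high-level cluster, as in the ABP17 analysis. The radii will be calibrated so that each gap of size $\asymp d^{1/k}$ in the emulator-style analysis can be shortcut through a tree rooted at the appropriate $A_i$-vertex, spending additive budget $\epsilon d/k$ per shortcut.

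The stretch analysis then mirrors the emulator proof of Theorem~\ref{thm:emulator}, but with one crucial saving: the innermost level can be handled by a single BFS step within a level-$(k-1)$ tree rather than by another recursive call. Iterating the shortcutting only $k-1$ times rather than $k$ gives the additive term $((4+o(1))k/\epsilon)^{k-1}$ stated in the theorem. For the size, the BFS contribution at level $i$ is bounded by $|A_i| \cdot R_i$ up to lower-order factors; with $|A_i| = n^{1-a_i}$ where $a_i = (2^{i+1}-2)/(2^{k+1}-1)$ and $R_i$ growing geometrically in $i$, one optimizes over $i$ and discovers that the dominant contribution sits at an intermediate level and evaluates to $O((k/\epsilon)^h\, n^{1+1/(2^{k+1}-1)})$ with the exponent $h = \frac{3\cdot 2^{k-1}-(k+2)}{2^{k+1}-1}$ falling out of the balance between $a_i$ and the growth of $R_i$.

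I expect the main obstacle to be verifying that the new, size-optimized bunches of Theorem~\ref{thm:hopset} still carry enough information to support the $(1+\epsilon)$-multiplicative stretch argument for spanners. The hopset analysis only needs to produce a low-hop approximate path in $G \cup H$, whereas the spanner analysis needs to commit at every step to an actual edge of $G$; in particular, when a shortest path enters a cluster at one boundary and leaves at another, the single-representative bunch must still permit a detour through the cluster's BFS tree within the claimed additive slack. I plan to resolve this with an inductive argument paralleling the stretch proof of Theorem~\ref{thm:emulator}, checking at each recursive level that the BFS radius $R_i$ is large enough to absorb the worst-case detour introduced by the streamlined bunches.
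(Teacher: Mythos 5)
The paper does not actually give a proof of Theorem~\ref{thm:spanner} in the text shown; it is asserted as a corollary of applying the same two modifications used for Theorem~\ref{thm:hopset} (the geometrically decaying $q_i$ in the style of Pettie~\cite{P09}, and restricting $\B(v)$ for $v\in V_i\setminus V_{i+1}$ to vertices of $V_i$) to the ABP17 spanner construction. Your high-level plan is therefore aligned with the paper's stated intent. However, as written the proposal is a sketch with several concrete issues that prevent it from establishing the claimed bound.

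First, your claimed $a_i = (2^{i+1}-2)/(2^{k+1}-1)$ in $|A_i|=n^{1-a_i}$ does not match the sampling probabilities $q_i = n^{-(2^i-1)/(2^{k+1}-1)}\cdot 2^{-2^i-i+1}$ used in the hopset construction, which give $a_i = (2^i-1)/(2^{k+1}-1)$; the factor-of-two error shifts the level at which the ``balance'' between $|A_i|$ and $R_i$ occurs and hence would change the exponent $h$. Second, your explanation of the $(k-1)$ power in the additive term --- that the innermost level is handled by a single BFS step rather than a recursive call --- is not the standard mechanism; the exponent $k-1$ typically arises from converting a sublinear-additive bound $d + 4kd^{1-1/k}$ into a $(1+\epsilon,\beta)$ bound by maximizing $4kd^{1-1/k}-\epsilon d$ over $d$, which gives $\beta \approx 4\,(4(k-1)/\epsilon)^{k-1}$. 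You should verify which mechanism ABP17 actually uses and whether it is compatible with the restricted-bunch modification. Third, and most importantly, the statement that $h=\frac{3\cdot 2^{k-1}-(k+2)}{2^{k+1}-1}$ ``falls out of the balance'' is not a derivation: you must write down the actual per-level BFS contribution (as a function of $q_i$, $R_i$, and the truncation scheme), sum it, and show the maximum occurs where claimed; otherwise the specific constant $h$ is unverified. Until these calculations are carried out (and the $a_i$ corrected), the proposal does not establish the theorem, though the architecture is the right one.
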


\begin{remark}
In recent and independent technical report, 
Elkin and Neiman~\cite{ElkinNeiman17} 
also observed that Thorup and Zwick's emulator 
yields an essentially optimal hopset.  
They proposed a modification to Thorup and Zwick's construction that reduces the size to $O(n^{1+\frac{1}{2^{k+1}-1}})$ (eliminating a factor $k$), 
but \emph{increases} the hopbound $\beta$ from $O(k/\epsilon)^k$ to $O((k+1)/\epsilon)^{k+1}$.  For example, their technique does not imply any of 
the improvements found in Theorems~\ref{thm:hopset}, \ref{thm:emulator}, or \ref{thm:spanner}.
\end{remark}

\section{The Hopset Construction}

In this section, we present the construction of the hopset 
based on Thorup and Zwick's emulator~\cite{TZ06},
then analyze its size, stretch, and hopbound.

The construction is parameterized by an integer $k\ge 1$
and a set $\{q_i\}$ of sampling probabilities.
Let $V=V_0\supseteq V_1 \supseteq V_2 \supseteq \cdots \supseteq V_k\supseteq V_{k+1}=\emptyset$ be the vertex sets in each layer.
For each $i\in [0,k)$, each vertex in $V_i$ is independently promoted 
to $V_{i+1}$ with probability $q_{i+1}/q_i$.
Thus $\E[|V_i|] = nq_i$.
For each vertex $v\in V$ and $i\in [1,k]$, define $p_i(v)$ to 
be any vertex in $V_i$ such that $\dist_G(v, p_i(v)) = \dist_G(v, V_i)$. 
For any vertex $v\in V_i\setminus V_{i+1}$, define $\B(v)$ to be:
\[
\B(v) = \{ u\in V_i\ |\ \dist_G(v, u) < \dist_G(v, p_{i+1}(v))\}
\]
Note that $p_{k+1}(v)$ does not exist; by convention $\dist_G(v, p_{k+1}(v))=\infty$.
The hopset is defined to be $H= E_0\cup E_1\cup\cdots\cup E_k$, where
\[E_i = \bigcup_{v\in V_i\setminus V_{i+1}}\left\{(v, u)\ |\ u\in \mathcal{B}(v) \cup \{ p_{i+1}(v)\}\right\}.\] 
The length of an edge in $H$ is always the distance between its endpoints.
This concludes the description of the construction.

\subsection{Size Analysis}

The expected size of 
$E_i$ is at most $\E[|V_i|](q_i/q_{i+1})=nq_i^2/q_{i+1}$, 
for each $i\in[0,k)$, and is $(nq_k)^2$ if $i=k$. 
Following Pettie~\cite{P09}, 
we choose $\{q_i\}$ such that the 
layers of the hopset have geometrically decaying sizes.
Setting $q_i = n^{-\frac{2^i-1}{2^{k+1}-1}} 
\cdot 2^{-2^i-i+1}$, the expected size of $E_i$, for $i\in[0,k)$, is 
\begin{align*}
    nq_i^2/q_{i+1} &=
    n \cdot \left(n^{-\frac{2^i-1}{2^{k+1}-1}} \cdot
    2^{-2^{i}-i+1} \right)^2
    \bigg/ \left(n^{-\frac{2^{i+1}-1}{2^{k+1}-1}} \cdot
    2^{-2^{i+1}-i} \right)
     \\
    &= n^{1 - \frac{2^{i+1}-2}{2^{k+1}-1} + \frac{2^{i+1}-1}{2^{k+1}-1}}\cdot 2^{-2^{i+1}-2i+2-(-2^{i+1}-i)}\\
    &= n^{1 + \frac{1}{2^{k+1}-1}}\cdot
    2^{-i+2}.
\end{align*}
The expected size of $E_k$ is
\begin{align*}
    (nq_k)^2 &= n^2\cdot \left(n^{-\frac{2^k-1}{2^{k+1}-1}}\cdot 2^{-2^k-k+1}\right)^2\\
    &= n^{1 + \frac{1}{2^{k+1}-1}} 2^{-2^{k+1}-2k+2} \;\ll\; n^{1 + \frac{1}{2^{k+1}-1}} \cdot 2^{-k+2},
\end{align*}
so the expected size of $H$ is at most
$$
\sum_{i=0}^k \E[|E_i|] \le n^{1 + \frac{1}{2^{k+1}-1}} \left(\sum_{i=0}^{k} {2^{-i+2}}\right) = O(n^{1 + \frac{1}{2^{k+1}-1}}).
$$

\subsection{Stretch and Hopbound Analysis}

Let us first give an informal sketch of the analysis.
Let $a,b$ be vertices.  Choose an integer $r\ge 2$, 
and imagine dividing up the shortest $a$--$b$
path into $r^k$ intervals of length $\mu = \dist_G(a,b)/r^k$, 
where $\mu$ defines one ``unit'' of length.
Once $r$ and $\mu$ are fixed we prove that given \emph{any}
two vertices $u,v$ at distance at most $r^i \mu$,
there is \emph{either} 
an $h_i$-hop path from $u$ to $v$ with 
additive stretch $O(ir^{i-1})\cdot \mu$, 
\emph{or} there is an $h_i$-hop path from $u$ to a $V_{i+1}$-vertex with
length $(r^i + O(ir^{i-1}))\cdot \mu$.  Of course, when $i=k$ the set $V_{k+1}=\emptyset$ is empty, so we cannot be in the second case.
Since, by definition of $\mu$, $\dist_G(a,b) \le r^k\mu$,
there must be an $h_k$-hop path with additive stretch $O(kr^{k-1})\cdot\mu$.
In order for this stretch 
to be $\epsilon\dist_G(a,b)$ we must set $r=\Theta(k/\epsilon)$.

So, to recap, the integer parameter $r=\Theta(k/\epsilon)$ depends
on the desired stretch $\epsilon$, and $r$ determines the hopcount 
sequence $(h_i)$, which is defined inductively as follows.
\begin{align*}
    h_0 &= 1,\\
    h_i &= (r+1)h_{i-1} + r   & \mbox{for $i\in[1,k]$}.
\end{align*}

The parameter $\beta$ of the hopset is exactly $h_k$.  It is straightforward to show that $h_k < 2(r+1)^k$.
Once $r$ and $(h_i)$ are fixed,
Theorem~\ref{lemma:stretch-analysis} is proved by induction.

\begin{theorem}\label{lemma:stretch-analysis}
For any fixed real $\mu$ (the ``unit''), for all $i\in[0,k]$ and any 
pair $u, v\in V$ such that $\dist_G(u, v)\le r^i \mu$, 
at least one of the following statements holds.
\begin{enumerate}[(i)]
\item $\dist_{G\cup H}^{(h_i)}(u, v) \le \dist_G(u, v) + ((r+4)^i - r^i)\mu$,
\item There exists $u_{i+1}\in V_{i+1}$ such that $\dist_{G\cup H}^{(h_i)}(u, u_{i+1}) \le (r+4)^i\mu$.
\end{enumerate}
\end{theorem}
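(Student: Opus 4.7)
The plan is to prove Theorem~\ref{lemma:stretch-analysis} by induction on $i$. For the base case $i=0$ we have $\dist_G(u,v)\le \mu$ and a budget of $h_0 = 1$ hop: if $u \in V_1$ we may take $u_1 = u$ for conclusion (ii); otherwise $u \in V_0 \setminus V_1$, and either $v \in \B(u)$ (the edge $(u,v) \in E_0 \subseteq H$ gives conclusion (i) in a single hop) or $v \notin \B(u)$, which by definition of $\B(u)$ forces $\dist_G(u, p_1(u)) \le \dist_G(u,v) \le \mu$, so the edge $(u, p_1(u)) \in E_0$ gives conclusion (ii).

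For the inductive step I would partition the shortest $u$--$v$ path into $L \le r$ consecutive chunks, each of length at most $r^{i-1}\mu$, defining milestones $u = m_0, m_1, \ldots, m_L = v$, and apply the inductive hypothesis at level $i-1$ to every pair $(m_j, m_{j+1})$. If every chunk returns conclusion (i) at level $i-1$, concatenating the sub-paths yields conclusion (i) for $(u,v)$: the total hop count $L h_{i-1}$ is comfortably below $h_i = (r+1)h_{i-1} + r$, and the additive stretch telescopes via the identity $(r+4)^i - r^i = r\bigl((r+4)^{i-1} - r^{i-1}\bigr) + 4(r+4)^{i-1}$. The extra $4(r+4)^{i-1}\mu$ of slack per level is what will pay for triangle-inequality detours in the harder case below.

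Otherwise, let $j^*$ be the smallest index at which conclusion (ii) fires at level $i-1$; then all earlier chunks give (i), and we reach $w = w_{j^*} \in V_i$ in $(j^*+1)h_{i-1}$ hops with path length at most $(j^*+1)(r+4)^{i-1}\mu$. If $w \in V_{i+1}$ we immediately satisfy conclusion (ii). Otherwise $w \in V_i \setminus V_{i+1}$, so $H$ contains $(w, p_{i+1}(w))$ together with every edge from $w$ into $\B(w)$. I would then split on $\dist_G(w, p_{i+1}(w))$: if it is at most $(r - j^* + 3)(r+4)^{i-1}\mu$, a single $E_i$-edge to $p_{i+1}(w) \in V_{i+1}$ closes out conclusion (ii) with total length at most $(r+4)^i\mu$. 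Otherwise $p_{i+1}(w)$ is so far from $w$ that $\B(w)$ swallows every $V_i$ landmark found at a later (ii)-chunk $j^{**}$, since the triangle inequality yields $\dist_G(w, w_{j^{**}}) \le 2(r+4)^{i-1}\mu + (j^{**}-j^*)r^{i-1}\mu < \dist_G(w, p_{i+1}(w))$ for every $j^{**} \le L - 1 \le r - 1$; so we hop in $E_i$ to such a $w_{j^{**}}$ and continue.

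I expect the main obstacle to be this last iterative phase: I need to argue that the resulting chain of $V_i$-to-$V_i$ hops in $E_i$ either eventually lands on a $V_{i+1}$ vertex or can be closed off by a final single-edge hop to the $p_{i+1}$ of the last landmark, all while keeping total path length under $(r+4)^i\mu$ and total hop count under $h_i$. The single-edge hops charged here are exactly what the $+r$ term in $h_i = (r+1)h_{i-1} + r$ is reserved for, and the $4(r+4)^{i-1}\mu$ slack from the telescoping identity is what absorbs the two triangle-inequality detours incurred each time we leave the original path to reach a $V_i$ landmark and later re-enter via an $E_i$ edge.
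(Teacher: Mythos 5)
Your base case and overall inductive framework match the paper's, and you have correctly identified the central dichotomy: compare the pivot distance $\dist_G(w,p_{i+1}(w))$ against the distances to far-away $V_i$ landmarks. However, there are two genuine gaps.

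First, your decomposition into $L\le r$ chunks each of length at most $r^{i-1}\mu$ need not exist: if the shortest path has an edge heavier than $r^{i-1}\mu$, no chunk containing that edge can satisfy the length bound, and the inductive hypothesis cannot be applied across it. The paper's decomposition alternates between \emph{multi-hop segments} of length $\le r^{i-1}\mu$ (handled by the IH) and \emph{single-hop segments} consisting of a single original edge of $G$ (traversed in one hop, at zero additional stretch). There are at most $r$ of the former and $r-1$ of the latter, and the $+r$ term in $h_i=(r+1)h_{i-1}+r$ pays for those single edges plus one $H$-edge, not for a chain of $E_i$-hops as you suggest.

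Second, the closing phase --- your iterative chain of $E_i$-hops between $V_i$ landmarks --- is left unresolved (as you acknowledge), and as stated it would not stay inside the hop budget: after hopping from $w$ to $w_{j^{**}}$ you must still rejoin the original path at a cost of $h_{i-1}$ hops, which your accounting omits, and each further iteration repeats that charge. The paper closes the argument without any iteration: take the \emph{first} and \emph{last} (ii)-firing multi-hop segments $\langle u_{j_1},u_{j_1+1}\rangle$ and $\langle u_{j_2-1},u_{j_2}\rangle$, apply the IH (from $u_{j_1}$ forward, and from $u_{j_2}$ backward) to obtain $u'\in V_i$ near $u_{j_1}$ and $v'\in V_i$ near $u_{j_2}$, and branch on whether the single edge $(u',v')$ lies in $H$. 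If $(u',v')\in H$, hop across the entire middle of $P$ in $2h_{i-1}+1$ hops, rejoin at $u_{j_2}$, and finish with (i); if $(u',v')\notin H$, then $v'\notin\B(u')$ forces $\dist_G(u',p_{i+1}(u'))\le\dist_G(u',v')$, and jumping to $p_{i+1}(u')\in V_{i+1}$ yields (ii). This is exactly the dichotomy you were reaching for, but with one cleanly chosen pair $(u',v')$ and no chain; adopting it, together with the corrected chunking, would complete your proof.
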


\begin{proof}
The proof is by induction on $i$.
In the base case $i=0$ and $h_0=1$. Let $u, v\in V$ with $\dist_G(u, v)\le r^0\mu = \mu$. If $(u, v)\in H$ then $\dist_{G\cup H}^{(1)}(u, v) = \dist_G(u, v)$ so (i) holds. Otherwise, $(u, v)\notin H$, meaning $v\notin \B(u)$.
If $u\in V_0 \setminus V_1$ then 
$\dist_{G\cup H}^{(1)}(u,p_1(u)) \le \dist_G(u,v) \le \mu$,
and if $u \in V_1$ then $p_1(u)=u$, so $\dist_{G\cup H}^{(1)}(u,p_1(u))=0$.
In either case, (ii) holds.

Now assume $i>0$.
Consider vertices $u, v\in V$ with $\dist_G(u, v)\le r^i \mu$
and let $P$ be a shortest $u$--$v$ path in $G$.
Then, as shown in Figure~\ref{fig:1}, we partition $P$ into at most $2r-1$ segments $\langle u_0=u, u_1\rangle$, $\langle u_1, u_2\rangle$, $\ldots$, $\langle u_{\ell-1}, u_\ell = v\rangle$ as follows.
Starting at $u_0=u$, we pick $u_1$ to be 
the \emph{farthest} vertex on $P$ such that $\dist_G(u_0,u_1) \le r^{i-1}\mu$,
and let $(u_1,u_2)$ be the next edge on the path.\footnote{Note that if the first edge has length more than $r^{i-1}\mu$, then $u_1=u_0$.}
Repeat the process until we reach $u_{\ell}=v$, oscillating between selecting segments that have length at most $r^{i-1}\mu$ and single edges.
\begin{itemize}\setlength{\itemsep}{0pt}
\item \emph{Multi-hop segment}: the shortest path from $u_s$ to $u_{s+1}$ satisfies $\dist_G(u_s, u_{s+1}) \le r^{i-1}\mu$.
\item \emph{Single-hop segment}: the segment is actually an edge $(u_s, u_{s+1})\in E$.
\end{itemize}

\begin{figure}
\centering
\includegraphics{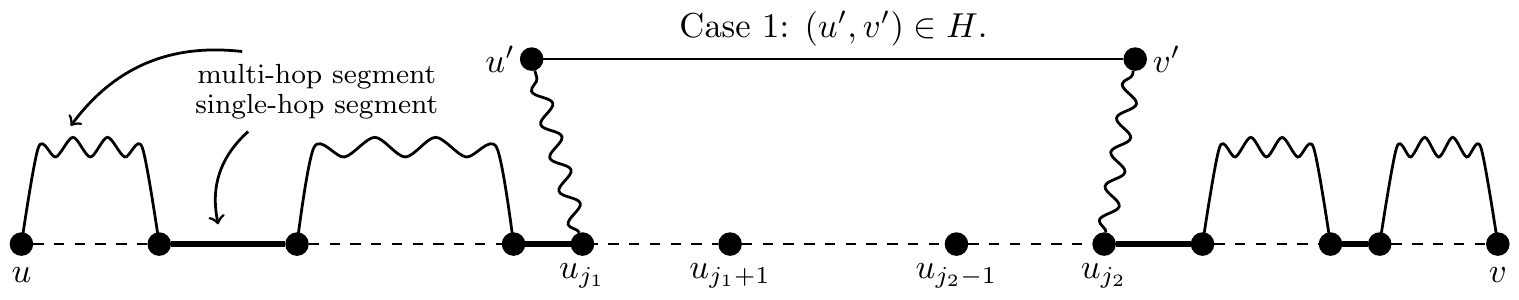}\\[0.5cm]
\includegraphics{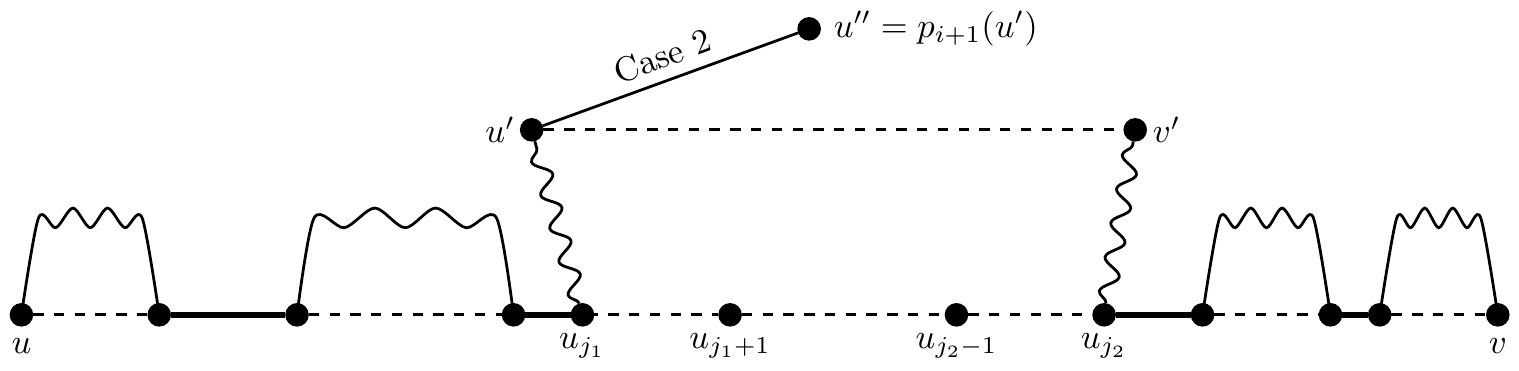}
\caption{The two cases depending on whether $(u', v')\in H$ or not. The first case leads to (i) and the second case leads to (ii) in the statement of Theorem~\ref{lemma:stretch-analysis}.}\label{fig:1}
\end{figure}

By the induction hypothesis, each multi-hop segment satisfies (i) or (ii) within $h_{i-1}$ hops.
Moreover, in each greedy iteration the sum of the lengths from picked multi-hop segment and immediately followed single-hop segment is strictly greater than $r^i\mu$ except the last one.
Therefore, by the pigeonhole principle, there are at most $r$ multi-hop segments on $P$ and at most $r-1$ single-hop segments on $P$.

If condition (i) holds for all multi-hop segments, then in at most $rh_{i-1}+r-1 \le h_i$ hops,
\begin{align*}
\dist_{G\cup H}^{(h_i)}(u, v) &\le \dist_G(u, v) + r((r+4)^{i-1} - r^{i-1})\mu\\
& \le \dist_G(u, v) + ((r+4)^i - r^i)\mu,
\end{align*}
and condition (i) holds for $P$.

Otherwise, condition (i) does not hold for at least one multi-hop segment. Consider the first multi-hop segment $\langle u_{j_1}, u_{j_1+1}\rangle$ and the last multi-hop segment $\langle u_{j_2-1}, u_{j_2}\rangle$ that do not satisfy condition (i). By condition (ii), there exist $u'$ and $v'\in V_i$ satisfying
\begin{align*}
    \dist_{G\cup H}^{(h_{i-1})}(u_{j_1}, u') &\le (r+4)^{i-1} \mu\\
    \dist_{G\cup H}^{(h_{i-1})}(u_{j_2}, v') & \le (r+4)^{i-1} \mu.
\end{align*}

Now we have two cases depending on whether $(u', v')\in H$ or not. If $(u', v')\in H$, then by the triangle inequality, we can get from $u_{j_1}$ to $u_{j_2}$ with $2h_{i-1}+1$ hops and additive stretch
\begin{align*}
    \dist_{G\cup H}^{(2h_{i-1}+1)}(u_{j_1}, u_{j_2}) - \dist_G(u_{j_1}, u_{j_2}) &\le \dist_{G\cup H}^{(h_{i-1})}(u_{j_1}, u') + \dist_H^{(1)}(u', v') + \dist_{G\cup H}^{(h_{i-1})}(v', u_{j_2}) - \dist_G(u_{j_1}, u_{j_2})\\
    &\le 2\dist_{G\cup H}^{(h_{i-1})}(u_{j_1}, u') + 2\dist_{G\cup H}^{(h_{i-1})}(v', u_{j_2})\\
    &\le 4(r+4)^{i-1}\mu. 
\end{align*}

We know there are a total of at most $r-1$ multi-hop segments satisfying condition (i). Hence, within at most $(r-1)h_{i-1} + r-1 + 2h_{i-1}+1\le h_i$ hops, we can get from $u$ to $v$ with additive stretch
\begin{align*}
    \dist_{G\cup H}^{(h_i)}(u, v) - \dist_G(u, v) &\le (r-1) ((r+4)^{i-1}-r^{i-1})\mu + \dist_{G\cup H}^{(2h_{i-1}+1)}(u_{j_1}, u_{j_2}) - \dist_G(u_{j_1}, u_{j_2})\\
&\le \left[(r-1)((r+4)^{i-1}-r^{i-1}) + 4(r+4)^{i-1}\right]\mu\\
&= \left[(r+3)(r+4)^{i-1} - r^i + r^{i-1}\right]\mu\\
&\le ((r+4)^i - r^i)\mu \tag{$r^{i-1}\le (r+4)^{i-1}$}
\end{align*}
and condition (i) holds for $P$ in this case.

On the other hand, suppose that $(u', v')\notin H$. Since both $u',v'\in V_i$ but $(u', v')\notin H$, we know that $u''=p_{i+1}(u')\in V_{i+1}$ must exist with $\dist_H^{(1)}(u', u'') \le \dist_G(u', v')$. Hence, we can get from $u_{j_1}$ to $u''$ via an $(h_{i-1}+1)$-hop path with length
\begin{align*}
\dist_{G\cup H}^{(h_{i-1}+1)}(u_{j_1}, u'') 
&\le\dist_{G\cup H}^{(h_{i-1})}(u_{j_1}, u') + \dist_{H}^{(1)}(u', u'') \\
&\le \dist_{G\cup H}^{(h_{i-1})}(u_{j_1}, u') + \dist_G(u', v')\\
&\le 2\dist_{G\cup H}^{(h_{i-1})}(u_{j_1}, u') + \dist_G(u_{j_1}, u_{j_2}) + \dist_{G\cup H}^{(h_{i-1})}(u_{j_2}, v')\\
&\le 3(r+4)^{i-1}\mu + \dist_G(u_{j_1}, u_{j_2}).
\end{align*}

Similar to the previous case, there are at most $r-1$ multi-hop segments appeared before $u_{j_1}$, and all of them are satisfying condition (i). Hence, the surplus
\begin{align*}
    \dist_{G\cup H}^{((r-1)h_{i-1}+r-1)}(u, u_{j_1}) 
    &\le \dist_G(u, u_{j_1}) + (r-1)((r+4)^{i-1} - r^{i-1})\mu.
\end{align*}

Therefore, in at most $(r-1)h_{i-1} + r-1 + h_{i-1}+1\le h_i$ hops,
\begin{align*}
\dist_{G\cup H}^{(h_i)}(u, u'') &\le \dist_{G\cup H}^{((r-1)h_{i-1}+r-1)}(u, u_{j_1}) + \dist_{G\cup H}^{(h_{i-1}+1)}(u_{j_1}, u'')\\
&\le \left[(r-1)((r+4)^{i-1}-r^{i-1}) + 3(r+4)^{i-1}\right]\mu + \dist_G(u, u_{j_2})\\
&\le \left[(r+2)(r+4)^{i-1} - r^i + r^{i-1}\right]\mu + \dist_G(u, u_{j_2})\\
&\le \left[(r+4)^i - r^i\right]\mu + \dist_G(u, u_{j_2})\tag{$r^{i-1}\le (r+4)^{i-1}$}\\
&\le (r+4)^i\mu \tag{$\dist_G(u, u_{j_2})\le \dist_G(u, v)\le r^i\mu$}
\end{align*}
\end{proof}

\begin{proof}[Proof of Theorem~\ref{thm:hopset}]
Fix $u, v\in V$ and $d = \dist_G(u, v)$. 
Define $\epsilon' = \ln (1+\epsilon)$. 
Notice that $1/\epsilon' = (1+o(1))(1/\epsilon)$. 
Set $r = \lceil 4k/\epsilon'\rceil = \Theta(k/\epsilon)$ and $\mu = d/r^k$. By Theorem~\ref{lemma:stretch-analysis}, since $V_{k+1}=\emptyset$, condition (i) must hold: within $h_k < 2(r+1)^k$ hops we have 
\begin{align*}
    d^{(h_k)}_{G\cup H}(u, v) &\le \dist_G(u, v) + ((r+4)^k - r^k) \mu\\
    &= d + \left(\frac{4k}{r} + \frac{4^2{k\choose 2}}{r^2} + \frac{4^3{k\choose 3}}{r^3} + \cdots \right) d\\
    &\le \left(1 + \epsilon' + \frac{\epsilon'^2}{2!} + \frac{\epsilon'^3}{3!} + \cdots \right) d \tag{since $4k/r \le \epsilon'$}\\
    &\le e^{\epsilon'} d = (1 + \epsilon) d.
\end{align*}
\end{proof}

Observe that if we set $k=\log\log n-O(1)$ the size becomes linear.

\begin{corollary}\label{cor:hopset}
Every $n$-vertex graph has 
an $O(n)$-size $(\beta,\epsilon)$-hopset with
$\beta = 2(\frac{(4+o(1))k}{\epsilon})^k$
and $k=\log\log n - O(1)$.
\end{corollary}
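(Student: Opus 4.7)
The plan is to invoke Theorem~\ref{thm:hopset} as a black box with a carefully chosen value of $k$, and verify that the stated choice $k = \log\log n - O(1)$ drives the size bound from $O(n^{1+1/(2^{k+1}-1)})$ down to $O(n)$. Since the hopbound $\beta = 2((4+o(1))k/\epsilon)^k$ and the size bound in Theorem~\ref{thm:hopset} are decoupled, the corollary reduces to picking $k$ just large enough to kill the extra $n^{1/(2^{k+1}-1)}$ factor while reporting the resulting $\beta$ verbatim.

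Concretely, I would set $k = \lceil \log_2 \log_2 n - c\rceil$ for a constant $c$ to be chosen, and then check that $n^{1/(2^{k+1}-1)} = O(1)$. The key estimate is that $2^{k+1} - 1 \ge 2^k \ge (\log_2 n)/2^c$ for large $n$, so
\[
n^{\frac{1}{2^{k+1}-1}} \;\le\; n^{2^c/\log_2 n} \;=\; 2^{2^c},
\]
which is a constant depending only on $c$. Substituting into Theorem~\ref{thm:hopset}, the size becomes $O(2^{2^c} \cdot n) = O(n)$, and $\beta$ is exactly $2((4+o(1))k/\epsilon)^k$ with the claimed $k$.

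There is no real obstacle here, since everything reduces to one logarithmic inequality; the only thing to be mildly careful about is absorbing the constant $2^{2^c}$ into the $O(\cdot)$ and confirming that $k = \log\log n - O(1)$ is compatible with the integrality requirement of $k$ in Theorem~\ref{thm:hopset}, which is handled by rounding and adjusting the additive constant inside the $O(1)$.
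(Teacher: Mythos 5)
Your proposal is correct and matches the paper's own (one-line) justification, which simply observes that taking $k=\log\log n - O(1)$ drives $n^{1/(2^{k+1}-1)}$ down to $O(1)$. The only thing worth noting explicitly is that the constant hidden in Theorem~\ref{thm:hopset}'s $O(n^{1+1/(2^{k+1}-1)})$ bound is absolute and does not grow with $k$ (the size analysis sums a geometric series $\sum_i 2^{-i+2}\le 8$), which is what licenses letting $k$ depend on $n$ here.
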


\section{Conclusion}

In this paper our goal 
was to demonstrate that hopset constructions need not be complex, 
and that optimal hopsets can be constructed with a \emph{simple} and elegant algorithm, namely a small modification to Thorup and Zwick's emulator construction~\cite{TZ06}.  From a purely quantitative perspective our hopsets also improve on the sparseness and/or hopbound of other constructions~\cite{Coh00,ElkinNeiman16,ElkinNeiman17}.  
As a happy byproduct of our construction, we also shave 
small factors off the best sublinear additive emulators~\cite{TZ06} and $(1+\epsilon,\beta)$-spanners~\cite{ABP17}.

We now have a good understanding of the tradeoffs available between $\beta$
and the hopset size when the stretch is fixed at $1+\epsilon$,
$\epsilon>0$ being a small real.  
However, when $\epsilon=0$ or $\epsilon$ is large, there are still gaps between the best upper and lower bounds.  For example, when $\epsilon=0$ a trivial 
hopset\footnote{Let $H$ be a clique on a set of $\sqrt{n}$ vertices chosen uniformly at random.} has size $O(n)$ with $\beta = O(\sqrt{n}\log n)$.
A construction of Hesse~\cite{Hesse03} (see also~\cite[\S 6]{ABP17}) implies that $\beta$ must be at least $n^\delta$ for some $\delta$, but it is open whether $O(n)$-size hopsets exist with $\beta \ll \sqrt{n}$.  At the other extreme, Thorup and Zwick's distance oracles imply that $O(\kappa n^{1+1/\kappa})$-size hopsets exist with $\beta=2$ and stretch $2\kappa-1$.  Is this tradeoff optimal?  Are there other tradeoffs available when $\beta$ is a fixed constant (say 3 or 4), 
independent of $\kappa$?

\paragraph{Acknowledgement.} Thanks to Richard Peng for help with the references for zero-stretch hopsets.  

\bibliography{main}

\begin{thebibliography}{10}

\bibitem{ABP17}
A.~Abboud, G.~Bodwin, and S.~Pettie.
\newblock A hierarchy of lower bounds for sublinear additive spanners.
\newblock In {\em Proceedings 28th Annual ACM-SIAM Symposium on Discrete
  Algorithms (SODA)}, pages 568--576, 2017.

\bibitem{ACIM99}
D.~Aingworth, C.~Chekuri, P.~Indyk, and R.~Motwani.
\newblock Fast estimation of diameter and shortest paths (without matrix
  multiplication).
\newblock {\em SIAM Journal on Computing}, 28(4):1167--1181, 1999.

\bibitem{Althofer+93}
I.~Alth\"{o}fer, G.~Das, D.~Dobkin, D.~Joseph, and J.~Soares.
\newblock On sparse spanners of weighted graphs.
\newblock {\em Discrete and Computational Geometry}, 9:81--100, 1993.

\bibitem{BKMP10}
S.~Baswana, T.~Kavitha, K.~Mehlhorn, and S.~Pettie.
\newblock Additive spanners and $(\alpha,\beta)$-spanners.
\newblock {\em ACM Trans. Algorithms}, 7(1), 2010.

\bibitem{Bernstein09}
A.~Bernstein.
\newblock Fully dynamic $(2 + \epsilon)$-approximate all-pairs shortest paths
  with fast query and close to linear update time.
\newblock In {\em Proceedings 50th Annual {IEEE} Symposium on Foundations of
  Computer Science ({FOCS})}, pages 693--702, 2009.

\bibitem{Cohen97}
E.~Cohen.
\newblock Using selective path-doubling for parallel shortest-path
  computations.
\newblock {\em Journal of Algorithms}, 22(1):30--56, 1997.

\bibitem{Coh00}
E.~Cohen.
\newblock Polylog-time and near-linear work approximation scheme for undirected
  shortest paths.
\newblock {\em J. ACM}, 47(1):132--166, 2000.

\bibitem{DHZ00}
D.~Dor, S.~Halperin, and U.~Zwick.
\newblock All-pairs almost shortest paths.
\newblock {\em SIAM Journal on Computing}, 29(5):1740--1759, 2000.

\bibitem{ElkinNeiman16}
M.~Elkin and O.~Neiman.
\newblock Hopsets with constant hopbound, and applications to approximate
  shortest paths.
\newblock In {\em Proceedings 57th Annual IEEE Symposium on Foundations of
  Computer Science, ({FOCS})}, pages 128--137, 2016.

\bibitem{ElkinNeiman17}
M.~Elkin and O.~Neiman.
\newblock Linear-size hopsets with small hopbound, and distributed routing with
  low memory.
\newblock {\em CoRR}, abs/1704.08468, 2017.

\bibitem{HKN14}
M.~Henzinger, S.~Krinninger, and D.~Nanongkai.
\newblock Decremental single-source shortest paths on undirected graphs in
  near-linear total update time.
\newblock In {\em Proceedings 55th Annual IEEE Symposium on Foundations of
  Computer Science ({FOCS})}, pages 146--155, 2014.

\bibitem{HKN16}
M.~Henzinger, S.~Krinninger, and D.~Nanongkai.
\newblock A deterministic almost-tight distributed algorithm for approximating
  single-source shortest paths.
\newblock In {\em Proceedings 48th Annual ACM Symposium on Theory of Computing
  (STOC)}, pages 489--498, 2016.

\bibitem{Hesse03}
W.~Hesse.
\newblock Directed graphs requiring large numbers of shortcuts.
\newblock In {\em Proceedings 14th Annual {ACM-SIAM} Symposium on Discrete
  Algorithms (SODA)}, pages 665--669, 2003.

\bibitem{KS97}
P.~N. Klein and S.~Subramanian.
\newblock A randomized parallel algorithm for single-source shortest paths.
\newblock {\em Journal of Algorithms}, 25(2):205--220, 1997.

\bibitem{Knudsen14}
M.~B.~T. Knudsen.
\newblock Additive spanners: {A} simple construction.
\newblock In {\em Proceedings 14th Scandinavian Symposium and Workshops on
  Algorithm Theory ({SWAT})}, pages 277--281, 2014.

\bibitem{MPVX15}
G.~L. Miller, R.~Peng, A.~Vladu, and S.~C. Xu.
\newblock Improved parallel algorithms for spanners and hopsets.
\newblock In {\em Proceedings 27th ACM Symposium on Parallelism in Algorithms
  and Architectures (SPAA)}, pages 192--201, 2015.

\bibitem{P09}
S.~Pettie.
\newblock Low distortion spanners.
\newblock {\em ACM Trans. Algorithms}, 6(1):7:1--7:22, 2009.

\bibitem{SS99}
H.~Shi and T.~H. Spencer.
\newblock Time-work tradeoffs of the single-source shortest paths problem.
\newblock {\em Journal of Algorithms}, 30(1):19--32, 1999.

\bibitem{TZ05}
M.~Thorup and U.~Zwick.
\newblock Approximate distance oracles.
\newblock {\em J. ACM}, 52(1):1--24, 2005.

\bibitem{TZ06}
M.~Thorup and U.~Zwick.
\newblock Spanners and emulators with sublinear distance errors.
\newblock In {\em Proceedings 17th Annual ACM-SIAM Symposium on Discrete
  Algorithms (SODA)}, pages 802--809, 2006.

\bibitem{UY91}
J.~D. Ullman and M.~Yannakakis.
\newblock High-probability parallel transitive-closure algorithms.
\newblock {\em SIAM J. Comput.}, 20(1):100--125, 1991.

\end{thebibliography}

\end{document}